\theoremstyle{plain}
\newtheorem{proposition}{Proposition}
\theoremstyle{definition}
\newtheorem{definition}{Definition}
\theoremstyle{remark}
\newtheorem{remark}{Remark}
\newcommand{\E}{\mathbb{E}}
\newcommand{\KL}[2]{D_{\mathrm{KL}}\!\left(#1\,\middle\|\,#2\right)}
\newcommand{\dd}{\,\mathrm{d}}
\title{Thermodynamics Reconstructed from Information Theory:\\
An Axiomatic Framework via Information-Volume Constraints and Path-Space KL Divergence}
\author{
Tatsuaki Tsuruyama$^{1,2}$\thanks{Email: \texttt{tsuruyam@kuhp.kyoto-u.ac.jp}}\\[2mm]
{\small $^{1}$Department of Physics, Tohoku University, Sendai 980-8578, Japan}\\
{\small $^{2}$Department of Drug Discovery Medicine, Kyoto University, Kyoto 606-8501, Japan}
}
\date{} 
\begin{document}
\maketitle

\begin{abstract}
We develop an axiomatic reconstruction of thermodynamics based entirely on two primitive components: a description of what aspects of a system are observed and a reference measure that encodes the underlying descriptive convention.
These ingredients define an “information volume’’ for each observational cell.
By incorporating the logarithm of this volume as an additional constraint in a minimum–relative-entropy inference scheme, temperature, chemical potential, and pressure arise as conjugate variables of a single information-theoretic functional.
This leads to a Legendre-type structure and a first-law-like relation in which pressure corresponds to information volume rather than geometric volume.

For nonequilibrium dynamics, entropy production is characterized through the relative-entropy asymmetry between forward and time-reversed stochastic evolutions.
A decomposition using observational entropy then separates total dissipation into system and environment contributions.
Heat is defined as the part of dissipation not accounted for by the system-entropy change, yielding a representation that does not rely on local detailed balance or a specific bath model.
We further show that the difference between joint and partially observed dissipation equals the average of conditional relative entropies, providing a unified interpretation of hidden dissipation and information-flow terms as projection-induced gaps.
\end{abstract}

\noindent\textbf{Keywords:} information thermodynamics, KL divergence, observational entropy,
information volume, dissipation, coarse-graining

\section{Introduction}\label{sec:intro}

Thermodynamic quantities (temperature, chemical potential, pressure, heat, work, etc.)
have traditionally been defined either (i) through equilibrium statistical mechanics,
via partition functions and thermodynamic potentials, or (ii) on specific dynamical models
describing energy exchange with a heat bath (e.g.\ Markov processes satisfying local detailed
balance (LDB)).
From the viewpoint of information theory, by contrast, entropy and free energy are naturally
characterized by the Kullback--Leibler (KL) divergence, and fluctuation theorems and the second law
are often expressed as time-reversal asymmetry of path measures
\cite{KawaiPRL2007,RoldanPRL2010,ZhangLuChaos2025,EspositoPRE2012}.

The aim of this paper is to push this perspective one step further:
\begin{enumerate}
  \item[(1)] We introduce an \emph{information--volume} variable derived from the primitive data
  $(M,\tau)$---a measurement map $M$ and a reference measure $\tau$---and define \emph{pressure}
  as a primary dual variable, on par with temperature and chemical potential, via an expectation
  constraint on the log-volume.

  \item[(2)] We introduce dissipation in a primary manner as the path-space KL divergence between
  the forward path measure $\mathbb{P}$ and its time-reversed counterpart $\mathbb{P}^{\mathrm R}$,
  \[
    \Sigma_{0,T} := \KL{\mathbb{P}}{\mathbb{P}^{\mathrm R}}\, .
  \]
  Combining this with the observational entropy $S_{M,\tau}$, we obtain a representation theorem
  that identifies the \emph{environmental entropy change} $\Delta S_{\mathrm{env}}$.

  \item[(3)] We integrate these static and dynamic constructions under a single reference measure
  $\tau$, and organize their relations to existing information thermodynamics (observational entropy,
  mutual information, hidden dissipation, etc.) in an axiomatic way.
\end{enumerate}

First, we start only from the primitive descriptive data $(M,\tau)$ and KL-based inference principles, we \emph{derive} (i) an equilibrium state family as a Min-KL solution, (ii) a Legendre-type dual structure, (iii) the intensive variables $(T,\mu,P)$ as Lagrange multipliers conjugate to $(U,\bar N,\bar \ell_V)$, and (iv) a model-independent representation of heat from the pair (path-KL dissipation, observational entropy). Second, we identify these derived quantities with conventional thermodynamic nomenclature in regimes where additional structure is available (e.g.\ geometric volume or local detailed balance), while outside such regimes the same definitions remain internally consistent and comparable across descriptions. Consequently,
because both static and dynamic notions are anchored to $(M,\tau)$ and path measures, the framework (a) defines pressure and heat even when geometric volume or a heat bath model are unavailable, (b) makes the system--environment entropy split explicitly gage dependent while keeping total dissipation gage invariant, (c) unifies hidden dissipation and information-flow terms as KL projection gaps, and (d) turns “observation design’’ into a principled problem (choosing $(M,\tau)$ to tighten observable dissipation bounds).
As follows, we present static Min-KL inference, dynamic path-KL dissipation, and the resulting information-theoretic representation of heat in as model-independent a form as possible.


\section{Primitive data, information volume, and observational entropy}
\label{sec:primitive}

We fix: (1) a measurable state space $(\Omega,\mathcal{F})$;  
(2) a measurable map $M:\Omega\to\mathcal{Y}$ specifying observation/coarse-graining;  
(3) a $\sigma$-finite reference measure $\tau$.

Let $C_y:=M^{-1}(y)$ be the induced cells.  
Their $\tau$-measures
\begin{equation}
  v(y) := \tau(C_y)
  \label{eq:info_volume_brief}
\end{equation}
define an \emph{information volume}.  
The associated log-volume variable is
\begin{equation}
  \ell_V(x) := \log \tau\!\bigl(C_{M(x)}\bigr)=\log v\!\bigl(M(x)\bigr).
  \label{eq:log_volume_brief}
\end{equation}

\subsection{Observational entropy}
\label{subsec:obs_entropy}

For $p\ll\tau$, define $p_M(y):=p(C_y)$.  
Observational entropy is
\begin{equation}
  S_{M,\tau}(p)
  :=
  -\sum_{y\in\mathcal{Y}} p_M(y)\,
  \log\!\left(\frac{p_M(y)}{\tau(C_y)}\right)
  =
  -\sum_y p_M(y)\log p_M(y)
  +
  \sum_y p_M(y)\log \tau(C_y),
  \label{eq:obs_entropy}
\end{equation}
for finite $\mathcal{Y}$.  
The framework and properties of observational entropy—including quantum generalizations and multi-resolution extensions—have been developed in  
\cite{SafranekPRA2019a,SafranekPRA2019b,SafranekFoP2021,SafranekPRE2020,BaiQuantum2024}.  
We adopt $S_{M,\tau}$ as the primary definition of system entropy.

\subsection{Gauge freedom in the choice of \texorpdfstring{$(M,\tau)$}{(M,tau)}}
\label{subsec:gauge}

Changes in $(M,\tau)$ represent \emph{changes of description}.

\subsubsection{(i) Refinement/coarse-graining of \texorpdfstring{$M$}{M}.}
If $M_1$ is a refinement of $M_2$ then (Safránek et al.)
\[
  S_{M_1,\tau}(p) \le S_{M_2,\tau}(p).
\]
Thus the system entropy change
\[
  \Delta S_{\mathrm{sys}}
  = S_{M,\tau}(p_T)-S_{M,\tau}(p_0)
\]
depends on $M$.  
However, the path-KL dissipation
\[
  \Sigma_{0,T}=D_{\mathrm{KL}}(\mathbb{P}\|\mathbb{P}^{\mathrm R})
\]
is independent of $M$.

\paragraph{(ii) Changing the reference measure \texorpdfstring{$\tau$}{tau}.}
If $\dd\tau'=e^{-\varphi}\dd\tau$, then
\[
  \KL(p\|\tau')
  =
  \KL(p\|\tau)+\int\varphi\,\dd p + \mathrm{const}.
\]
Thus changing $\tau$ corresponds to an additive gauge transformation of entropy.

\begin{remark}
  In the decomposition
  \[
    \Sigma_{0,T} = \Delta S_{\mathrm{sys}} + \Delta S_{\mathrm{env}},
  \]
  both $\Delta S_{\mathrm{sys}}$ and $\Delta S_{\mathrm{env}}$ depend on $(M,\tau)$,  
  but $\Sigma_{0,T}$ is invariant.  
  Our framework separates the gauge-invariant dissipation from the gauge-dependent bookkeeping of system/environment entropy.
\end{remark}


\section{Static theory: Min-KL inference and dual variables}
\label{sec:static}

We define thermodynamic states not by dynamical assumptions but as outcomes of an information-update rule.
The basic principle is minimum cross-entropy (minimum KL) with respect to the reference measure $\tau$, justified information-theoretically by Jaynes' MaxEnt interpretation \cite{Jaynes1957} and the Shore--Johnson axiomatization \cite{ShoreJohnson1980}.

As constraints, we specify expectations of energy $E(x)$, particle number $N(x)$, and the information-volume code length $\ell_V(x)$:
\begin{equation}
  \E_p[E]=U,\qquad \E_p[N]=\bar N,\qquad \E_p[\ell_V]=\bar \ell_V .
  \label{eq:constraints}
\end{equation}

We then define the state $p^\ast$ by
\begin{equation}
  p^\ast(U,\bar N,\bar \ell_V)
  :=
  \arg\min_{p}\; \KL{p}{\tau}
  \quad\text{s.t. }\eqref{eq:constraints}.
  \label{eq:minKL_state}
\end{equation}

Here the KL divergence (using the Radon–Nikodym density $r=\dd p/\dd\tau$) is
\begin{equation}
  \KL{p}{\tau}
  :=
  \int_\Omega \log r(x)\, p(\dd x)
  =
  \int_\Omega r(x)\log r(x)\,\tau(\dd x).
  \label{eq:KL_def}
\end{equation}

Under convexity and regularity assumptions, the minimizer belongs to the exponential family:
\begin{equation}
  p^\ast(\dd x)
  =
  \frac{1}{Z(\beta,\alpha,\pi)}\,
  \exp\!\Bigl(-\beta E(x)-\alpha N(x)-\pi \ell_V(x)\Bigr)\,\tau(\dd x),
  \label{eq:exp_family}
\end{equation}
where $Z$ is the normalization constant.

\subsection{Definitions of temperature, chemical potential, and pressure}
\label{subsec:intensive_defs}

We define the intensive variables \emph{primarily} as dual variables:
\begin{equation}
  \frac{1}{T} := \beta,\qquad
  -\frac{\mu}{T} := \alpha,\qquad
  \frac{P}{T} := \pi.
  \label{eq:intensive_defs}
\end{equation}

Thus temperature is the multiplier of the energy constraint, chemical potential is associated with the particle-number constraint, and pressure is the shadow price of the information-volume constraint.

\paragraph{Intuition and a non-geometric example.}
The “information volume’’ $v(y)=\tau(C_y)$ is the size (measure) of the set of microstates compatible with a given observational outcome $y$.
Unlike geometric volume, $v(y)$ can quantify \emph{degeneracy} or \emph{description-dependent resolution} in state spaces that are discrete, combinatorial, or otherwise non-spatial.
Constraining $\E[\ell_V]$ with $\ell_V=\log v(M(x))$ fixes the typical code length associated with the observational cells visited by the system, and the multiplier $\pi=P/T$ is the corresponding shadow price: it quantifies how strongly the Min-KL state penalizes occupying outcomes with large degeneracy (large $v$), even when no physical container volume exists.

\paragraph{Example (combinatorial degeneracy).}
Let $\Omega=\{0,1\}^N$ (spin configurations) with counting measure $\tau$, and let $M$ output the magnetization $m=\frac{1}{N}\sum_i (2x_i-1)$.
Then each cell $C_m$ consists of all configurations with magnetization $m$, hence
$v(m)=|C_m|=\binom{N}{(N(1+m))/2}$ and $\ell_V=\log v(m)$ is purely combinatorial.
The constraint $\E[\ell_V]=\bar\ell_V$ controls the typical degeneracy of the observed macrostates, and $P$ is the intensive variable conjugate to this non-geometric “volume’’ in the same dual sense as $T$ and $\mu$.

Define the log-partition function
\begin{equation}
  \Psi(\beta,\alpha,\pi)
  :=
  \log \int_\Omega 
  \exp\!\Bigl(-\beta E - \alpha N - \pi \ell_V\Bigr)\,
  \tau(\dd x)
  = \log Z(\beta,\alpha,\pi).
  \label{eq:log_partition}
\end{equation}
Then
\begin{equation}
  U = -\partial_\beta\Psi,\qquad
  \bar N = -\partial_\alpha\Psi,\qquad
  \bar \ell_V = -\partial_\pi\Psi.
  \label{eq:conjugacy}
\end{equation}

Let  
\begin{equation}
  S^\ast(U,\bar N,\bar \ell_V)
  := -\KL{p^\ast}{\tau}.
  \label{eq:Sstar_def}
\end{equation}
Convex duality gives
\begin{equation}
  \dd S^\ast = \beta\,\dd U + \alpha\,\dd\bar N + \pi\,\dd\bar \ell_V.
  \label{eq:dSstar}
\end{equation}

Substituting \eqref{eq:intensive_defs} yields a first-law-type differential relation:
\begin{equation}
  \dd U
  = T\,\dd S^\ast - P\,\dd\bar \ell_V + \mu\,\dd\bar N.
  \label{eq:first_law_like}
\end{equation}

\begin{proposition}[Information-theoretic derivation of a first-law-type relation]
Assume the Min-KL problem \eqref{eq:minKL_state} has a unique solution $p^\ast$, and that $\Psi(\beta,\alpha,\pi)$ is $C^2$ and convex.  
Let $S^\ast$ be defined by \eqref{eq:Sstar_def}.  
Then the total differential satisfies \eqref{eq:dSstar}.  
With the intensive variables defined by \eqref{eq:intensive_defs}, the differential of internal energy satisfies \eqref{eq:first_law_like}.
\end{proposition}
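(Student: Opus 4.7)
The plan is to exploit the exponential-family form \eqref{eq:exp_family} so that the KL divergence collapses to an affine function of the constraints and the log-partition function, turning $S^\ast$ into the Legendre transform of $-\Psi$. I will then differentiate that identity and use the conjugacy relations \eqref{eq:conjugacy} to see the first-order terms in $d\beta, d\alpha, d\pi$ cancel, leaving only the differentials of $U, \bar N, \bar\ell_V$.

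First I would substitute the Radon--Nikodym density $r^\ast=\exp(-\beta E-\alpha N-\pi\ell_V-\Psi)$ into \eqref{eq:KL_def}, giving
\begin{equation*}
  \KL{p^\ast}{\tau}
  = \E_{p^\ast}\!\bigl[-\beta E-\alpha N-\pi\ell_V-\Psi\bigr]
  = -\beta U-\alpha\bar N-\pi\bar\ell_V-\Psi,
\end{equation*}
where the constraint equations \eqref{eq:constraints} have been applied. Hence
\begin{equation*}
  S^\ast(U,\bar N,\bar\ell_V) \;=\; \Psi(\beta,\alpha,\pi)+\beta U+\alpha\bar N+\pi\bar\ell_V,
\end{equation*}
which is the Legendre transform relation between the log-partition function and the entropy.

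Second I would differentiate both sides, treating $(\beta,\alpha,\pi)$ as the functions of $(U,\bar N,\bar\ell_V)$ implicitly determined (uniquely, by the assumed strict convexity and $C^2$ smoothness of $\Psi$) by the constraint map. Expanding,
\begin{equation*}
  \dd S^\ast
  = \bigl(\partial_\beta\Psi\bigr)\dd\beta+\bigl(\partial_\alpha\Psi\bigr)\dd\alpha+\bigl(\partial_\pi\Psi\bigr)\dd\pi
  + U\dd\beta+\beta\dd U + \bar N\dd\alpha+\alpha\dd\bar N + \bar\ell_V\dd\pi+\pi\dd\bar\ell_V,
\end{equation*}
and the conjugacy relations \eqref{eq:conjugacy} make the three $\dd\beta,\dd\alpha,\dd\pi$ pairs cancel exactly, yielding \eqref{eq:dSstar}. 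Substituting \eqref{eq:intensive_defs} and multiplying through by $T$ then isolates $\dd U$ to produce \eqref{eq:first_law_like}.

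The main obstacle is not the algebra but the regularity needed to make the implicit change of variables $(\beta,\alpha,\pi)\leftrightarrow(U,\bar N,\bar\ell_V)$ rigorous: one must verify that the Hessian of $\Psi$ is positive definite on the effective domain so that the gradient map is a $C^1$ diffeomorphism, which also secures uniqueness of the Min-KL solution. Under the stated $C^2$, strict convexity, and full-rank assumptions this is standard Legendre--Fenchel theory, and the interior subtlety is to ensure the constraint vector $(U,\bar N,\bar\ell_V)$ lies in the relative interior of the image of $\nabla\Psi$ so the dual variables are well defined; once that is granted, the computation above delivers both differential identities as exact equalities of one-forms on the state manifold.
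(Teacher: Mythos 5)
Your proof is correct and lands on the same Legendre-duality structure as the paper's, but it takes a more explicit and self-contained route. The paper simply asserts ``by convex duality'' that $S^\ast$ is the Legendre transform of $\Psi$ and then reads off $\partial_U S^\ast=\beta$, etc., from stationarity (an envelope-theorem step); you instead substitute the exponential-family density \eqref{eq:exp_family} into \eqref{eq:KL_def} to obtain the identity $S^\ast=\Psi+\beta U+\alpha\bar N+\pi\bar\ell_V$, differentiate it totally, and let the conjugacy relations \eqref{eq:conjugacy} cancel the $\dd\beta,\dd\alpha,\dd\pi$ terms. The two arguments are mathematically equivalent, but your computation has the added benefit of pinning down the sign of $\Psi$ in the duality relation: with the conventions of \eqref{eq:log_partition} and \eqref{eq:conjugacy} the correct statement is $S^\ast=\inf_{\beta,\alpha,\pi}\{\beta U+\alpha\bar N+\pi\bar\ell_V+\Psi\}$, whose stationarity condition $U=-\partial_\beta\Psi$ matches \eqref{eq:conjugacy}, whereas the paper's displayed infimum carries $-\Psi$, which is inconsistent with \eqref{eq:conjugacy} and, since $\beta U-\Psi$ is concave in $\beta$, would not even admit an interior minimizer. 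Your closing remarks on positive-definiteness of the Hessian of $\Psi$ and on $(U,\bar N,\bar\ell_V)$ lying in the interior of the range of $-\nabla\Psi$ correctly identify the regularity the paper compresses into its ``unique solution, $C^2$ and convex'' hypotheses; strict convexity (not mere convexity) is what is actually needed for the change of variables to be a diffeomorphism.
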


\begin{proof}
By convex duality,
\[
  S^\ast(U,\bar N,\bar \ell_V)
  =
  \inf_{\beta,\alpha,\pi}
  \{\beta U+\alpha\bar N+\pi\bar \ell_V - \Psi(\beta,\alpha,\pi)\}.
\]
The minimizer satisfies the stationarity conditions \eqref{eq:conjugacy}, giving
\[
  \partial_U S^\ast=\beta,\qquad
  \partial_{\bar N} S^\ast=\alpha,\qquad
  \partial_{\bar \ell_V} S^\ast=\pi,
\]
hence \eqref{eq:dSstar}.  
Inserting \eqref{eq:intensive_defs} and solving for $\dd U$ yields \eqref{eq:first_law_like}.
\end{proof}


\section{Nonequilibrium extension: path-space KL and hidden dissipation}
\label{sec:dynamic}

We extend the preceding information-theoretic framework to time-evolving nonequilibrium processes.
Irreversibility (entropy production, dissipation) is defined as the time-reversal asymmetry of \emph{path measures}.
The most general formulation—independent of coordinate choices and independent of any model-specific assumptions such as local detailed balance—is the path-space KL divergence  
\cite{KawaiPRL2007,RoldanPRL2010,ZhangLuChaos2025}.

\subsection{Definition of path dissipation (entropy production)}

Let $\mathbb{P}$ be the forward path measure of a stochastic process on $[0,T]$, and $\mathbb{P}^{\mathrm R}$ the time-reversed path measure induced by the reversal map.
Define dissipation by
\begin{equation}
  \Sigma_{0,T}
  :=
  \KL{\mathbb{P}}{\mathbb{P}^{\mathrm R}}
  \ge 0.
  \label{eq:EP_path}
\end{equation}
This definition requires no thermodynamic model, no heat bath, and no LDB assumptions.

\subsection{Lower bound under observation (coarse-graining) and hidden dissipation}
\label{subsec:coarse_grain}

Observation in time induces a map on paths:
\begin{equation}
  \Pi:\ (x_{0:T})\mapsto(y_{0:T}),\qquad y_t=M(x_t).
  \label{eq:path_channel}
\end{equation}

Define observed path measures $\Pi\mathbb{P}$ and $\Pi\mathbb{P}^{\mathrm R}$, and observed dissipation
\begin{equation}
  \Sigma_{\mathrm{obs}}
  :=
  \KL{\Pi\mathbb{P}}{\Pi\mathbb{P}^{\mathrm R}}.
  \label{eq:EP_obs}
\end{equation}

By the data-processing inequality,
\begin{equation}
  \Sigma_{0,T}
  =
  \KL{\mathbb{P}}{\mathbb{P}^{\mathrm R}}
  \ge
  \KL{\Pi\mathbb{P}}{\Pi\mathbb{P}^{\mathrm R}}
  =
  \Sigma_{\mathrm{obs}}.
  \label{eq:EP_lowerbound}
\end{equation}

\subsubsection{Joint / marginal dissipations and hidden dissipation.}

Consider a joint system $(X,Y)$ with joint forward and reverse path measures $\mathbb{P}_{XY},\mathbb{P}_{XY}^{\mathrm R}$, and marginal measures $\mathbb{P}_X,\mathbb{P}_X^{\mathrm R}$.
Define
\begin{align}
  \Sigma_{XY}
  &:= D_{\mathrm{KL}}(\mathbb{P}_{XY}\|\mathbb{P}_{XY}^{\mathrm R}),
    \label{eq:Sigma_XY_def}\\
  \Sigma_X
  &:= D_{\mathrm{KL}}(\mathbb{P}_X\|\mathbb{P}_X^{\mathrm R}).
    \label{eq:Sigma_X_def}
\end{align}

The difference
\begin{equation}
  \Sigma_{\mathrm{hidden}}^{(X)}
  :=
  \Sigma_{XY}-\Sigma_X
  \label{eq:Sigma_hidden_def}
\end{equation}
quantifies the dissipation invisible from subsystem $X$.

\begin{proposition}[Projection-gap representation and nonnegativity of hidden dissipation]
Assume that joint and marginal path measures admit densities with respect to common references, and that conditional path measures $\mathbb{P}_{Y|X},\mathbb{P}^{\mathrm R}_{Y|X}$ exist.
Then
\begin{equation}
  \Sigma_{\mathrm{hidden}}^{(X)}
  =
  \int
    D_{\mathrm{KL}}\!\left(
      \mathbb{P}_{Y|X=\gamma}
      \,\middle\|\,
      \mathbb{P}^{\mathrm R}_{Y|X=\gamma}
    \right)
    \mathbb{P}_X(\dd\gamma)
  \ge 0.
  \label{eq:Sigma_hidden_explicit}
\end{equation}
\end{proposition}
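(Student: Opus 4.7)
The plan is to reduce the statement to the chain rule (or tower property) for KL divergence applied to joint and conditional path measures, and then invoke nonnegativity of KL termwise. Under the standing assumption that joint and marginal path measures admit densities with respect to common references and that regular conditional versions $\mathbb{P}_{Y|X}$ and $\mathbb{P}^{\mathrm R}_{Y|X}$ exist, one may write the Radon--Nikodym density of $\mathbb{P}_{XY}$ against $\mathbb{P}_{XY}^{\mathrm R}$ in factorized form and separate the $X$-marginal factor from the conditional factor.

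The execution proceeds in three steps. First, I would disintegrate the joint measures, writing
\[
  \mathbb{P}_{XY}(\dd\gamma,\dd\eta)
  = \mathbb{P}_{Y|X=\gamma}(\dd\eta)\,\mathbb{P}_X(\dd\gamma),
  \qquad
  \mathbb{P}_{XY}^{\mathrm R}(\dd\gamma,\dd\eta)
  = \mathbb{P}^{\mathrm R}_{Y|X=\gamma}(\dd\eta)\,\mathbb{P}_X^{\mathrm R}(\dd\gamma),
\]
which holds under the regularity hypotheses by existence of regular conditional probabilities on Polish path spaces. Second, taking logarithms of the Radon--Nikodym derivative, integrating against $\mathbb{P}_{XY}$, and applying Fubini gives the chain rule
\[
  D_{\mathrm{KL}}(\mathbb{P}_{XY}\|\mathbb{P}_{XY}^{\mathrm R})
  = D_{\mathrm{KL}}(\mathbb{P}_X\|\mathbb{P}_X^{\mathrm R})
  + \int D_{\mathrm{KL}}\!\bigl(\mathbb{P}_{Y|X=\gamma}\,\|\,\mathbb{P}^{\mathrm R}_{Y|X=\gamma}\bigr)\,
    \mathbb{P}_X(\dd\gamma).
\]
Third, substituting the definitions \eqref{eq:Sigma_XY_def}, \eqref{eq:Sigma_X_def}, and \eqref{eq:Sigma_hidden_def} yields \eqref{eq:Sigma_hidden_explicit}; nonnegativity follows because each integrand $D_{\mathrm{KL}}(\mathbb{P}_{Y|X=\gamma}\|\mathbb{P}^{\mathrm R}_{Y|X=\gamma})\ge 0$ by Gibbs' inequality, and the outer integral preserves this.

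The main obstacle I anticipate is not the algebra but the measure-theoretic bookkeeping that makes the chain rule legitimate on path space: one must ensure (a) mutual absolute continuity of $\mathbb{P}_{XY}$ with respect to $\mathbb{P}_{XY}^{\mathrm R}$ on the relevant $\sigma$-algebra so that all densities and logarithms are well defined $\mathbb{P}_{XY}$-a.s., (b) that the time-reversal operation commutes with marginalization so that the $X$-marginal of $\mathbb{P}_{XY}^{\mathrm R}$ is indeed $\mathbb{P}_X^{\mathrm R}$, and (c) existence of \emph{regular} conditional distributions $\mathbb{P}_{Y|X},\mathbb{P}^{\mathrm R}_{Y|X}$, which is standard on Polish spaces but should be cited. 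Once these are in place, the identity and the bound follow immediately from the chain rule; in particular equality in \eqref{eq:Sigma_hidden_explicit} holds iff the two conditional laws coincide $\mathbb{P}_X$-a.s., which gives a natural characterization of when subsystem $X$ already sees all of the dissipation.
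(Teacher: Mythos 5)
Your proposal is correct and follows essentially the same route as the paper's own proof: disintegrate $\mathbb{P}_{XY}$ and $\mathbb{P}^{\mathrm R}_{XY}$ into marginal and conditional factors, apply the chain rule for KL divergence, identify the first term with $\Sigma_X$, and conclude nonnegativity from Gibbs' inequality termwise. Your additional attention to the measure-theoretic prerequisites (mutual absolute continuity, commutation of reversal with marginalization, existence of regular conditional distributions) is a welcome refinement but does not change the argument.
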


\begin{proof}
Write
\[
 \dd\mathbb{P}_{XY}
  =\dd\mathbb{P}_X\,\dd\mathbb{P}_{Y|X},\qquad
 \dd\mathbb{P}^{\mathrm R}_{XY}
  =\dd\mathbb{P}^{\mathrm R}_X\,\dd\mathbb{P}^{\mathrm R}_{Y|X}.
\]
Then apply the chain rule for KL divergence:
\begin{align}
  \Sigma_{XY}
  &=\int
      \log\!\frac{\dd\mathbb{P}_X}{\dd\mathbb{P}^{\mathrm R}_X}
      \dd\mathbb{P}_X
    +\int
      \left[
        \int\log\!\frac{\dd\mathbb{P}_{Y|X}}{\dd\mathbb{P}^{\mathrm R}_{Y|X}}
        \dd\mathbb{P}_{Y|X}
      \right]
      \dd\mathbb{P}_X
    \nonumber\\
  &=\Sigma_X
    +\int
      D_{\mathrm{KL}}(
        \mathbb{P}_{Y|X=\gamma}
        \|
        \mathbb{P}^{\mathrm R}_{Y|X=\gamma})
      \mathbb{P}_X(\dd\gamma).
\end{align}
The KL divergence is nonnegative, completing the proof.
\end{proof}

This result shows that ``hidden dissipation'' is not an ad hoc correction but a KL projection gap.
It provides a geometric foundation for mutual-information flow terms in frameworks of Ito--Sagawa, Horowitz--Esposito, etc.\ 
\cite{ItoSagawa2013,HorowitzEsposito2014,ParrondoHorowitzSagawa2015}.

\subsection{Mutual information and correlation free energy}
\label{subsec:MI_free_energy}

For a static joint distribution $p_{XY}$ with marginals $p_X,p_Y$, mutual information is
\begin{equation}
  I(X;Y)=D_{\mathrm{KL}}(p_{XY}\|p_Xp_Y).
  \label{eq:MI_def}
\end{equation}

Assume the Hamiltonian is additive:
\begin{equation}
  E_{XY}(x,y)=E_X(x)+E_Y(y).
  \label{eq:additive_H}
\end{equation}

Let $q_X,q_Y$ be canonical references at temperature $T$:
\begin{equation}
  q_X(x)\propto e^{-\beta E_X(x)},\qquad
  q_Y(y)\propto e^{-\beta E_Y(y)}.
  \label{eq:canonical_ref}
\end{equation}

Define nonequilibrium free energies:
\[
  F_X[p_X]=\sum_x p_X(x)E_X(x)-TS(p_X),\qquad
  F_Y[p_Y]=\sum_y p_Y(y)E_Y(y)-TS(p_Y),
\]
\[
  F_{XY}[p_{XY}]
  =\sum_{x,y}p_{XY}(x,y)(E_X(x)+E_Y(y))-TS(p_{XY}).
\]

\begin{proposition}[Free-energy decomposition with correlation term]
Under the above assumptions,
\begin{equation}
  F_{XY}[p_{XY}]
  =
  F_X[p_X]+F_Y[p_Y]+T I(X;Y).
  \label{eq:Fxy_decomp}
\end{equation}
Moreover,
\begin{equation}
  D_{\mathrm{KL}}(p_{XY}\|q_Xq_Y)
  =
  I(X;Y)+D_{\mathrm{KL}}(p_X\|q_X)+D_{\mathrm{KL}}(p_Y\|q_Y).
  \label{eq:KL_product_chain}
\end{equation}
\end{proposition}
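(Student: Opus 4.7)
The plan is to prove the two identities separately, both via straightforward combinations of the additivity of $E_{XY}$, marginalization, and a chain-rule decomposition for entropy or log-likelihood ratios.

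For \eqref{eq:Fxy_decomp}, I would first exploit \eqref{eq:additive_H} and marginalize to rewrite the energy expectation as
\[
\sum_{x,y}p_{XY}(x,y)\bigl(E_X(x)+E_Y(y)\bigr)
= \sum_x p_X(x)E_X(x)+\sum_y p_Y(y)E_Y(y),
\]
so that the energy part of $F_{XY}$ splits cleanly into the energy parts of $F_X$ and $F_Y$. Next, I would invoke the standard identity $S(p_{XY})=S(p_X)+S(p_Y)-I(X;Y)$, which is just \eqref{eq:MI_def} rewritten in entropy form after expanding $\log(p_X p_Y)=\log p_X+\log p_Y$ and using marginalization again. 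Assembling $F_{XY}=\sum_{x,y}p_{XY}E_{XY}-TS(p_{XY})$ and substituting both decompositions, the purely marginal pieces aggregate into $F_X[p_X]+F_Y[p_Y]$ and the correlation excess leaves the extra $+TI(X;Y)$ contribution.

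For \eqref{eq:KL_product_chain}, I would work pointwise with the log-likelihood ratio
\[
\log\frac{p_{XY}(x,y)}{q_X(x)q_Y(y)}
=\log\frac{p_{XY}(x,y)}{p_X(x)p_Y(y)}
+\log\frac{p_X(x)}{q_X(x)}
+\log\frac{p_Y(y)}{q_Y(y)},
\]
and take expectation under $p_{XY}$. The first term contributes $I(X;Y)$ by \eqref{eq:MI_def}. In the second term the integrand depends only on $x$, so the sum over $y$ collapses the joint to the marginal $p_X(x)$ and yields $\KL{p_X}{q_X}$; the third term is analogous, giving $\KL{p_Y}{q_Y}$.

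As a consistency check, I would verify that the two identities are mutually compatible: writing $F_X[p_X]=-T\log Z_X+T\,\KL{p_X}{q_X}$ with the canonical partition function $Z_X$, and analogously for $Y$ and $XY$, the additivity \eqref{eq:additive_H} implies $Z_{XY}=Z_XZ_Y$, so multiplying \eqref{eq:KL_product_chain} by $T$ and adding the log-partition terms reproduces \eqref{eq:Fxy_decomp}. There is no substantive obstacle here; the only care point is to make explicit where additivity of $E_{XY}$ enters, namely in collapsing the energy expectation onto marginals in part~(1) and, implicitly, in ensuring that $q_Xq_Y$ is the correct canonical reference for the joint Hamiltonian in part~(2).
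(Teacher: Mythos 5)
Your proposal is correct and follows essentially the same route as the paper: the first identity via the entropy decomposition $S(p_{XY})=S(p_X)+S(p_Y)-I(X;Y)$ together with additivity of the Hamiltonian, and the second via a rearrangement of the log-likelihood ratio (the paper phrases this through $p_{XY}=p_Xp_{Y|X}$, you split the logarithm directly into three terms, which is the same computation). Your version is simply more explicit, and the final consistency check via $Z_{XY}=Z_XZ_Y$ is a nice addition not present in the paper.
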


\begin{proof}
Use the entropy identity $S(p_{XY})=S(p_X)+S(p_Y)-I(X;Y)$ to obtain \eqref{eq:Fxy_decomp}.  
The KL decomposition \eqref{eq:KL_product_chain} follows by writing $p_{XY}=p_Xp_{Y|X}$ and rearranging terms.
\end{proof}

\subsection{Example: a two-state Markov-chain toy model}
\label{subsec:toy_markov}

We illustrate the general theory using the simplest irreversible stochastic dynamics:  
a two-state, time-homogeneous Markov chain.  
Despite its minimal structure, this model already displays all key features of path-space
dissipation, its dependence on observational resolution, and the decomposition into observed and hidden components.

Let the state space be $\{0,1\}$, and let the transition matrix be
\begin{equation}
  P =
  \begin{pmatrix}
    1-a & a \\
    b & 1-b
  \end{pmatrix},
  \qquad 0 < a,b < 1,
\end{equation}
so that both states communicate.  
The stationary distribution is
\[
  \pi_0 = \frac{b}{a+b},\qquad 
  \pi_1 = \frac{a}{a+b}.
\]

A one-step trajectory is $(x_0,x_1)$ drawn from $\mathbb{P}(x_0=i,x_1=j)=\pi_i P_{ij}$.  
Stationarity implies that the corresponding time-reversed one-step trajectory distribution is
\[
  \mathbb{P}^{\mathrm R}(x_0=i,x_1=j)
  =\pi_j P_{ji}.
\]
Thus the dissipation (entropy production) over one step is the path-space KL divergence
\begin{equation}
  \Sigma_{0,1}
  =
  \sum_{i,j\in\{0,1\}}
  \pi_i P_{ij}\,
  \log\!\frac{\pi_i P_{ij}}{\pi_j P_{ji}},
  \label{eq:toy_sigma}
\end{equation}
which is strictly positive unless detailed balance holds ($\pi_i P_{ij}=\pi_j P_{ji}$).  
This quantity provides the benchmark against which the effects of coarse-graining will be measured.

\subsection{Full observation}
\subsubsection{(a) Full observation \texorpdfstring{$M_{\mathrm{id}}$}{Mid}.}

Under full observation, each microstate transition is visible, so the observational channel is the identity.  
Therefore $\Pi\mathbb{P}=\mathbb{P}$ and $\Pi\mathbb{P}^{\mathrm R}=\mathbb{P}^{\mathrm R}$, giving
\[
  \Sigma_{\mathrm{obs}} = \Sigma_{0,1},
  \qquad
  \Sigma_{\mathrm{hidden}}^{(X)} = 0.
\]

If the reference measure is uniform on $\{0,1\}$, the observational entropy $S_{M_{\mathrm{id}},\tau}$ coincides with Shannon entropy.  
Because the system starts and ends in the same stationary distribution, the system entropy change is zero:
\[
  \Delta S_{\mathrm{sys}} = 0.
\]
Therefore all dissipation is allocated to the environment:
\[
  \Delta S_{\mathrm{env}} = \Sigma_{0,1},
  \qquad
  Q = T\,\Sigma_{0,1}.
\]

This case shows that in a fully resolved description, the dissipation is entirely revealed at the observational level; no hidden dissipation remains.
\paragraph{(b) Extreme coarse-graining \texorpdfstring{$M_{\mathrm{triv}}$}{Mtriv}.}

Consider the opposite observational limit, in which the measurement map collapses both states into a single outcome.  
Let $M_{\mathrm{triv}}(0) = M_{\mathrm{triv}}(1) = y_0$.  
Then every path is mapped to the same constant trajectory $(y_0,y_0)$ both forward and backward, so
\[
  \Pi\mathbb{P} = \Pi\mathbb{P}^{\mathrm R},
  \qquad
  \Sigma_{\mathrm{obs}} = 0.
\]

Thus the entire dissipation becomes invisible at the observational scale:
\[
  \Sigma_{\mathrm{hidden}}^{(X)} = \Sigma_{0,1}.
\]

With a single observational cell, the observational entropy is constant (and may be taken as zero), so again
\[
  \Delta S_{\mathrm{sys}} = 0,
  \qquad
  \Delta S_{\mathrm{env}} = \Sigma_{0,1}.
\]

Unlike the fully observed case, here the system appears perfectly reversible even though the underlying dynamics is irreversible.  
All dissipation is attributed to the “environment’’ due solely to loss of resolution in $M$.  
All dissipation is attributed to the “environment’’ due solely to loss of resolution in $M$.  
This explicitly demonstrates that the system–environment split in entropy accounting is gauge-dependent, whereas the total dissipation $\Sigma_{0,1}$ is invariant. This toy model highlights three essential principles of the general theory:

\begin{enumerate}
  \item Even the simplest Markov chain exhibits nonzero dissipation whenever detailed balance is broken.
  \item Observational resolution controls how much dissipation is observable; coarse-graining always decreases the observed part, but the total dissipation remains invariant.
  \item The decomposition of dissipation into system and environment terms depends on the chosen observational gauge, whereas the KL divergence on path space provides the unique, gauge-invariant notion of irreversibility.
\end{enumerate}

These features extend directly to high-dimensional or continuous-state dynamics, where coarse-graining may hide massive amounts of dissipation, making the KL projection-gap interpretation crucial for analyzing nonequilibrium systems.


\section{Heat as a representation theorem}
\label{sec:heat_representation}

We now combine the static intensive variable $T$ with the dynamic dissipation $\Sigma_{0,T}$.
Because $\Sigma_{0,T}$ is model-independent and uniquely defined by path-time asymmetry, and because the system entropy $S_{M,\tau}$ is fixed by the choice of $(M,\tau)$, it follows that the only consistent definition of heat must be constructed from these two quantities.

\subsection{Definition of heat}

Let a process start from $p_0$ and end at $p_T$.
We have the decomposition
\begin{equation}
  \Sigma_{0,T}
  = \Delta S_{\mathrm{sys}} + \Delta S_{\mathrm{env}},
  \qquad
  \Delta S_{\mathrm{sys}}
  := S_{M,\tau}(p_T) - S_{M,\tau}(p_0).
  \label{eq:EP_split}
\end{equation}

\begin{definition}[Heat]
Define
\begin{equation}
  Q
  :=
  T\,\Delta S_{\mathrm{env}}
  =
  T\bigl(\Sigma_{0,T}-\Delta S_{\mathrm{sys}}\bigr).
  \label{eq:heat_def}
\end{equation}
\end{definition}

Thus heat is determined uniquely by the pair $(T,\Sigma_{0,T})$ once the gauge $(M,\tau)$ has been fixed.

\begin{remark}[Connection to standard stochastic-thermodynamics heat under LDB]
In conventional stochastic thermodynamics, one often defines a trajectory-level heat $Q_{\mathrm{LDB}}[\gamma]$
via energy exchange with an explicit bath, assuming local detailed balance (LDB).
In that setting the path-probability ratio admits the form
\[
\log\frac{\dd\mathbb{P}[\gamma]}{\dd\mathbb{P}^{\mathrm R}[\gamma^{\mathrm R}]}
=
\Delta s_{\mathrm{sys}}[\gamma] + \beta\, Q_{\mathrm{LDB}}[\gamma],
\]
so that upon averaging one obtains
$\Sigma_{0,T}=\Delta S_{\mathrm{sys}} + \beta\,\langle Q_{\mathrm{LDB}}\rangle$.
Therefore, when $(M,\tau)$ is chosen so that $S_{M,\tau}$ reduces to the usual Gibbs--Shannon entropy,
our definition $Q := T(\Sigma_{0,T}-\Delta S_{\mathrm{sys}})$ coincides with the standard mean heat $\langle Q_{\mathrm{LDB}}\rangle$.
The present framework keeps this consistency while not \emph{requiring} LDB or an explicit bath model:
Eq.~\eqref{eq:heat_def} defines the unique heat scalar compatible with (i) the Min-KL temperature,
(ii) path-KL dissipation as irreversibility, and (iii) the chosen entropic bookkeeping $S_{M,\tau}$.
\end{remark}

\begin{proposition}[Representation theorem for heat]
\label{prop:heat_rep}
Heat is uniquely determined by the following axioms:
\begin{enumerate}
  \item[(i)] (\textbf{Static consistency})  
    For quasistatic processes connecting Min-KL equilibrium states,  
    $Q = T\,\Delta S_{\mathrm{env}}$ reproduces the standard thermodynamic relation.
  \item[(ii)] (\textbf{Dynamic consistency})  
    For arbitrary processes, irreversibility is quantified by $\Sigma_{0,T}$.
  \item[(iii)] (\textbf{Gauge covariance})  
    System entropy depends on $(M,\tau)$ as $S_{M,\tau}$;  
    dissipation $\Sigma_{0,T}$ is invariant;  
    heat transforms as a scalar $Q=T(\Sigma_{0,T}-\Delta S_{\mathrm{sys}})$.
\end{enumerate}
Then $Q$ must be given by \eqref{eq:heat_def}.
\end{proposition}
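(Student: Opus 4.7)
The plan is to derive $Q = T(\Sigma_{0,T}-\Delta S_{\mathrm{sys}})$ as the unique scalar satisfying the three axioms by treating them as constraints on a candidate heat functional $Q[\mathbb{P};M,\tau]$. The proof is essentially one of algebraic uniqueness, once the content of each axiom is unpacked carefully; no new analytical machinery is required beyond the decomposition in Eq.~\eqref{eq:EP_split} and the intensive variable of Eq.~\eqref{eq:intensive_defs}.

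First, I would invoke axiom (ii) together with Eq.~\eqref{eq:EP_split}. Since $\Sigma_{0,T}$ is fixed by the path-measure asymmetry in Eq.~\eqref{eq:EP_path} and is therefore a gauge-invariant primitive, while $\Delta S_{\mathrm{sys}}$ is fully determined by the gauge $(M,\tau)$ via Eq.~\eqref{eq:obs_entropy}, the environmental entropy change $\Delta S_{\mathrm{env}}=\Sigma_{0,T}-\Delta S_{\mathrm{sys}}$ is uniquely determined once $(M,\tau)$ is chosen. No residual freedom remains in how dissipation is split between system and environment.

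Second, I would apply axiom (i) to pin down the calibration constant relating $\Delta S_{\mathrm{env}}$ to $Q$. For quasistatic processes connecting Min-KL equilibrium states, the Clausius relation forces the proportionality to be exactly the multiplier $T$ of the energy constraint \eqref{eq:intensive_defs}; any other constant would contradict the reduction to standard thermodynamics on the equilibrium subclass. Combining this calibration with the decomposition from step one yields the formula on all quasistatic processes, and axiom (ii) extends the same calibration outside equilibrium because the only dynamical input allowed is $\Sigma_{0,T}$. Axiom (iii) then verifies that the resulting $Q$ transforms correctly: under refinement or reparametrization of $(M,\tau)$, the invariance of $\Sigma_{0,T}$ and $T$ together with the stated gauge dependence of $S_{M,\tau}$ make $Q$ transform as a scalar in exactly the prescribed way.

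The main obstacle is conceptual rather than technical: axiom (iii) as written already exhibits the target formula, so the proof risks circularity. I would therefore frame the argument by reading (iii) minimally as the statement that $Q$ is a scalar functional of $(T,\Sigma_{0,T},\Delta S_{\mathrm{sys}})$ alone with the stated invariance/covariance structure, not as supplying the closed form. Under this minimal reading, steps one and two produce the formula, and (iii) serves only to exclude candidates that would depend on further gauge-dependent data; any such additional dependence would be incompatible either with the invariance of $\Sigma_{0,T}$ under refinement of $M$ or with the quasistatic boundary condition from (i). This rules out every alternative and forces $Q = T(\Sigma_{0,T}-\Delta S_{\mathrm{sys}})$.
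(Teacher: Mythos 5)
Your argument is essentially the paper's own proof: both treat (i)--(iii) as constraints that single out $Q=T(\Sigma_{0,T}-\Delta S_{\mathrm{sys}})$ as the only admissible scalar combination of $(T,\Sigma_{0,T},\Delta S_{\mathrm{sys}})$, using the decomposition \eqref{eq:EP_split} to fix $\Delta S_{\mathrm{env}}$ and axiom (i) to calibrate the factor $T$. You are in fact more careful than the paper's one-line proof in explicitly flagging that axiom (iii) already exhibits the target formula; your ``minimal reading'' workaround still does not exclude, e.g., a monotone nonlinear dependence $Q=T\bigl(g(\Sigma_{0,T})-\Delta S_{\mathrm{sys}}\bigr)$ with $g(0)=0$, but the paper's own argument shares exactly this gap, so your proposal is at least as rigorous as the published one.
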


\begin{proof}
Given the decomposition \eqref{eq:EP_split}, the only allowed scalar combination of $(T,\Sigma_{0,T},\Delta S_{\mathrm{sys}})$ that satisfies (i)--(iii) is $Q=T(\Sigma_{0,T}-\Delta S_{\mathrm{sys}})$.
Any alternative expression would violate either static consistency (quasistatic limit), dynamic consistency (monotonicity with dissipation), or gauge covariance.
\end{proof}

\subsection{Cross-entropy consistency and entropy flows}
\label{subsec:CE_consistency}

We now justify \eqref{eq:heat_def} by a cross-entropy identity.
Let $p_0,p_T$ be the initial and final distributions, and $p^\ast_0,p^\ast_T$ the corresponding Min-KL equilibrium states with respect to $(M,\tau)$.

The cross-entropy $\mathrm{CE}(p\|q):=-\int p\log q$ satisfies
\[
  \mathrm{CE}(p_T\|p_T^\ast)-\mathrm{CE}(p_0\|p_0^\ast)
  =
  \KL(p_T\|p_T^\ast)-\KL(p_0\|p_0^\ast)
  + S_{M,\tau}(p_T)-S_{M,\tau}(p_0).
\]

Using that $p^\ast_t$ satisfies the thermodynamic relation $U_t=U[p^\ast_t]$, one obtains
\begin{equation}
  \Delta S_{\mathrm{sys}}
  + \frac{Q}{T}
  =
  \Sigma_{0,T}
  +
  \bigl[ \KL(p_T\|p_T^\ast)-\KL(p_0\|p_0^\ast)\bigr].
  \label{eq:CE_identity_main}
\end{equation}

For processes that start or end near equilibrium, the bracketed term is $o(1)$, so the identity reduces to \eqref{eq:heat_def}.
Thus the proposed heat expression is \emph{cross-entropy consistent}.

\begin{remark}
Equation \eqref{eq:CE_identity_main} also clarifies the role of nonequilibrium free energy:
\[
  F[p]=U[p]-TS_{M,\tau}(p)
  =F[p^\ast]+T\,\KL(p\|p^\ast).
\]
Changes in nonequilibrium free energy appear as the KL terms in \eqref{eq:CE_identity_main}.
\end{remark}


\section{Relation to existing theories and features of the present framework}
\label{sec:discussion}

This section summarizes how our theoretical core relates to existing information thermodynamics and highlights conceptual features and open problems. 
Expressing dissipation as a path-space KL divergence,
\[
  \Sigma_{0,T}
  =
  D_{\mathrm{KL}}\!\left(
    \mathbb{P} \,\middle\|\, \mathbb{P}^{\mathrm R}
  \right),
\]
is established for reversible phase-space dynamics \cite{KawaiPRL2007} and for general stochastic processes \cite{RoldanPRL2010,ZhangLuChaos2025}.  
The coarse-graining lower bound $\Sigma_{0,T}\ge\Sigma_{\mathrm{obs}}$ (Eq.~\eqref{eq:EP_lowerbound}) appears in discussions of dissipation under coarse graining \cite{GomezMarinPRE2008,EspositoPRE2012}.

On the static side, observational entropy $S_{M,\tau}(p)$—including the cell volume $\tau(C_y)$ in the entropy definition—has been systematically developed by Safr\'{a}nek \emph{et al.} in classical and quantum settings \cite{SafranekPRA2019a,SafranekPRA2019b,SafranekFoP2021,SafranekPRE2020,BaiQuantum2024}.
Our contribution is to connect observational entropy to \emph{path-KL dissipation} and to \emph{dual variables in Min-KL inference}, thereby unifying static and dynamic information thermodynamics under a single reference measure $\tau$.

Finally, the idea that correlations between a system and memory (or environment) store free energy quantified by $k_{\mathrm B}T$ times mutual information $I(X;Y)$ has been emphasized repeatedly in the literature by Sagawa--Ueda and Horowitz--Esposito and others
\cite{ItoSagawa2013,HorowitzEsposito2014,ParrondoHorowitzSagawa2015}.  
We embed these discussions into a KL-geometric decomposition of $D_{\mathrm{KL}}(p_{XY}\|q_X q_Y)$ (Eq.~\eqref{eq:KL_product_chain}), treating local and correlation free energies within a single information-geometric framework.

\subsection{Conceptual features}

Our main conceptual contributions can be summarized as follows.

\begin{enumerate}
  \item[(i)] \textbf{Primary definition of pressure via information-volume constraints.}  
    We introduce volume not as geometric space volume but via the cell volume $v(y)=\tau(C_y)$ determined by $(M,\tau)$ and the associated log-volume $\ell_V(x)$.  
    Incorporating the constraint $\E[\ell_V]=\bar \ell_V$ into Min-KL inference defines pressure $P$ as the shadow price (dual variable), on equal footing with temperature $T$ and chemical potential $\mu$ (Eq.~\eqref{eq:intensive_defs}).  
    This repositions pressure as the variable conjugate to an observation- and convention-dependent information volume, enabling unified extensions to nonstandard state spaces and coarse-grainings.

  \item[(ii)] \textbf{Hidden dissipation and information terms as projection gaps.}  
    At the path level, the difference between joint dissipation $\Sigma_{XY}$ and marginal dissipation $\Sigma_X$ is expressed as an expectation of a KL divergence between conditional path measures (Eq.~\eqref{eq:Sigma_hidden_explicit}).  
    Thus hidden dissipation $\Sigma_{\mathrm{hidden}}^{(X)}$ is not an ad hoc correction term but an intrinsic KL gap induced by projection from the joint description to a marginal one.  
    Information terms and mutual-information flows in existing information thermodynamics can be understood as special closed-form evaluations of this projection gap.

  \item[(iii)] \textbf{Heat reconstructed as an information-theoretic representation theorem.}  
    With dissipation defined by path KL, system entropy defined by observational entropy, and temperature defined as a Min-KL dual variable, we define
    \[
      \Delta S_{\mathrm{env}}=\Sigma_{0,T}-\Delta S_{\mathrm{sys}}, 
      \qquad
      Q = T\Delta S_{\mathrm{env}}.
    \]
    In this viewpoint, $Q$ is not a primitive notion but a secondary representation derived from dissipation decomposition and dual variables, disentangled from model-dependent descriptions such as “energy exchange with a bath.’’
\end{enumerate}

\subsection{Open problems and directions}

Several theoretical and applied directions arise naturally.

\begin{enumerate}
  \item[(a)] \textbf{Phase-transition structure with information-volume constraints.}  
    In Min-KL inference with $\E[\ell_V]=\bar \ell_V$, it is important to analyze where the dual map $(\beta,\alpha,\pi)\mapsto(U,\bar N,\bar \ell_V)$ loses regularity and when ensemble nonequivalence occurs, directly relating to phase transitions involving information volume.  
    Effective models without geometric volume or with nonlocal measurements $M$ may display novel critical phenomena.

  \item[(b)] \textbf{Optimizing observational dissipation bounds and observation design.}  
    The gap between observational dissipation $\Sigma_{\mathrm{obs}}$ and true dissipation $\Sigma_{0,T}$ (hence hidden dissipation) is sensitive to $(M,\tau)$.  
    Designing $(M,\tau)$ to maximize $\Sigma_{\mathrm{obs}}$ under experimental constraints is a natural optimization problem linking statistical efficiency of dissipation estimation to experimental design.

  \item[(c)] \textbf{New decomposition formulas combining the first-law-type relation and path dissipation.}  
    Combining the static first-law-type relation \eqref{eq:first_law_like} with path-KL dissipation suggests new decompositions/inequalities involving changes in internal energy, information volume (or its code length), and particle number together with dissipation, correlation generation, and hidden dissipation.  
    For instance, deriving second-law-like inequalities including an information-volume work-like term and the production of correlation free energy $TI(X;Y)$ under a general prior $\tau$ would deepen the link between nonequilibrium thermodynamics and information theory.

  \item[(d)] \textbf{Applications to concrete models and comparison with existing principles.}  
    In planned applications to RNA polymerase--DNA systems and chemical reaction networks analyzed via large-deviation theory \cite{Tsuruyama2025}, explicit computation of correlation free energy, hidden dissipation, and heat representation \eqref{eq:heat_def} will clarify agreement or deviation from conventional thermodynamic descriptions such as fluctuation theorems and Landauer-type principles.

\end{enumerate}

\medskip
\noindent\textbf{Closing remarks.}
Taken together, these contributions recast pressure, dissipation, and heat as quantities that can be defined and compared across descriptions through the primitive observational data $(M,\tau)$ and path-space relative entropy.
By making the information-volume constraint and the path-KL definition of dissipation primary, we obtain a unified and representation-theoretic framework in which conventional thermodynamic notions emerge as secondary objects, while projection-induced gaps quantify the cost of partial observation.
We expect that this viewpoint will be especially useful for systems where geometric volume, energy-bath idealizations, or microscopic reversibility assumptions are not directly available, and for designing measurements that make dissipation experimentally accessible.



\end{document}